\newcommand{\CN}[1]{\begin{CJK*}{UTF8}{gbsn}#1\end{CJK*}} 
\renewcommand{\vec}[1]{\bm{\mathbf{#1}}}
\newcommand{\Z}{\mathbb{Z}}
\newcommand{\vac}{\text{vac}}
\newcommand{\up}{{\scriptscriptstyle\uparrow}}
\newcommand{\down}{{\scriptscriptstyle\downarrow}}
\newcommand{\shortarrow}{\hspace{-0.5mm}\mathrel{\scalebox{0.7}{$\rightarrow$}}\hspace{-0.5mm}}
\newcommand{\loc}{\text{loc}}
\newcommand{\LE}{\texttt{LE}}
\newcommand{\RE}{\texttt{RE}}
\newtheorem{theorem}{Theorem}
\theoremstyle{definition}
\theoremstyle{remark}
\begin{document}

\title{Algorithms for variational Monte Carlo calculations of fermion projected entangled pair states in the swap gates formulation and the detailed balance of tensor network sequential sampling}

\author{Yantao Wu (\CN{武琰涛})}
\email{yantaow@iphy.ac.cn}
\affiliation{Institute of Physics, Chinese Academy of Sciences, Beijing 100190, China}

\author{Zhehao Dai (\CN{戴哲昊})}
\affiliation{Department of Physics and Astronomy, University of Pittsburgh, PA 15213, USA}

\date{\today}

\begin{abstract}
In recent years, the variational Monte Carlo (VMC) calculations of projected entangled pair states (PEPS) has emerged as a competitive method for computing the ground states of many-body quantum systems. 
This method is particularly important for fermion systems where sign problems are abundant. 
We derive and explain the algorithms for the VMC calculations of fermion PEPS in the swap gates formulation.  
As a separate key result, we prove the detailed balance of sequential sampling of tensor networks. 
\end{abstract}

\maketitle
\section{Introduction}
Numerical computation of ground states of two-dimensional many-body fermion systems is a fundamental problem in condensed matter physics. 
Due to the fermion exchange, sign problems are abundant in quantum Monte Carlo simulations \cite{loh1990sign} for fermions, making the method difficult for a broad range of signful systems.  
Based on the variational principle of energy, variational calculations using a wavefunction ansatz can potentially be a better alternative for signful systems, if challenges in representation and optimization can be overcome. 
In one dimension, a fermion system can be mapped to a system of hard-core bosons via the Jordan-Wigner (JW) transformation \cite{JW}, which can then be variationally solved with matrix product states (MPS) as the wavefunction ansatz optimized via the density matrix renormalization group algorithm \cite{dmrg}.

In two dimensions, the JW transformation introduces non-locality to the system; an alternative wavefunction ansatz that preserves locality in fermion systems is more suitable for representing ground states of local Hamiltonians. 
Fermion projected entangled pair states (fPEPS) \cite{fPEPS_VF, fPEPS_QC, fPEPS_VF2, fPEPS_graded} are invented for this: given an fPEPS state, if a local operator supported in region $R$ acts on the state, only the tensors within $R$ change. 
The initial formulation of fPEPS is based on the virtual fermions living on the virtual bond of the PEPS. 
Later on, people have also adopted equivalent formulations based on swap gates \cite{fPEPS_SG, fMERA_SG} or Grassmann numbers \cite{fPEPS_Grassmann,liu2025hubbard} to study fPEPS.

Optimizing a PEPS for the ground states of strongly correlated systems \cite{verstraete2004renormalization, algorithm_finite_peps} is a long-standing challenge in the field. 
Recently, variational Monte Carlo (VMC) optimization has emerged as the method that gave the lowest computed ground state energies for several challenging systems \cite{liu2018gapless,liu2022gapless,liu2022emergence,liu2024emergent,liu2024j1j2j3,liu2024quantum}. 
In particular, it has given the state-of-the-art result for fermion Hubbard model at $\frac{1}{8}$ hole doping \cite{liu2025hubbard}, a particularly challenging and important regime. 
In the appendix of Ref. \cite{liu2025hubbard}, the VMC algorithm is explained for fPEPS in the Grassmann formulation.
However, the Grassmann formulation requires subtle reimplementations of tensor network operations, and it is not entirely clear how to adapt it to fPEPS implementations written in the swap gates formulation, which is one of the simplest formulations used in the field. Here we would like to derive and demonstrate the VMC algorithm in the swap gates formulation, and spell out the implementation of fermion sign in each step of the algorithm.

Crucially, as we explain the sampling algorithm, we also prove the detailed balance of the sequential sampling of PEPS \cite{liu2021}, which has remained a confusion in the field and left unproved until now.  

This paper is organized as follows. 
In section \ref{sec:swap_gates}, we explain fPEPS in the swap gates formulation. 
In section \ref{sec:vmc}, we explain the VMC algorithm of fPEPS and prove the detailed balance of the sequential sampling. 
In section \ref{sec:examples}, we demonstrate the result of VMC of fPEPS for certain representative fermion systems. 
In section \ref{sec:conclusion}, we discuss and conclude. 
\section{Fermion PEPS in the swap gates formulation}
\label{sec:swap_gates}
\subsection{spinless fermions}
The tensor product structure of the many-body fermion Hilbert space can be established by turning it into a (hard-core) boson Hilbert space via prescribing the fermion modes an order, equivalent to using a JW string.  
\begin{figure}[t]
    \centering
    \includegraphics[width=0.8\linewidth]{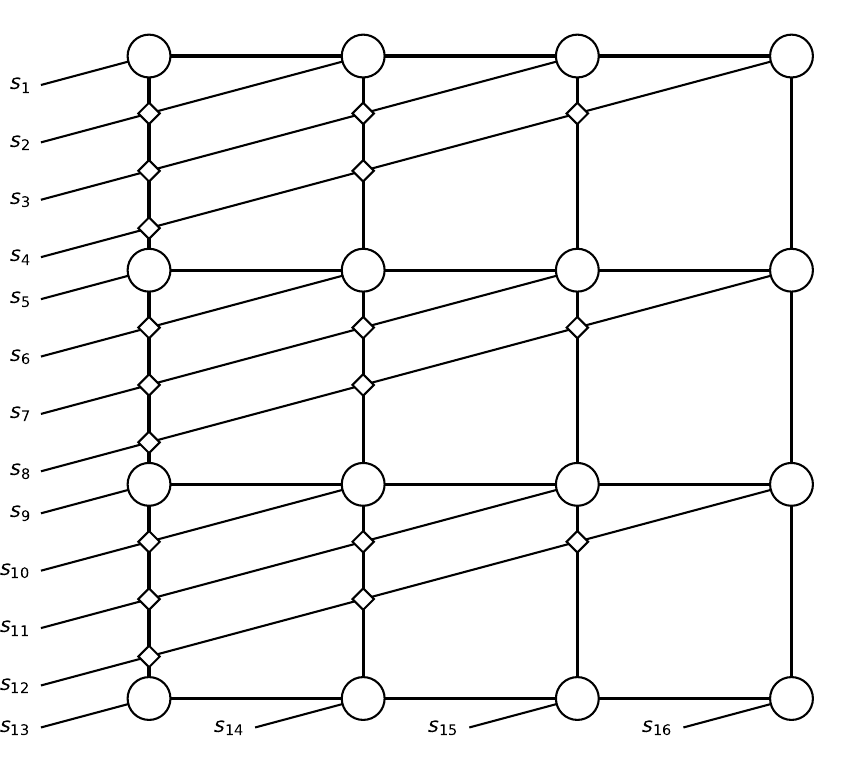} 
    \caption{A $4\times 4$ fermion PEPS with swap gates.}
    \label{fig:peps_sg}
\end{figure}
On a square lattice, in the swap gates formulation, the ordering of the fermion modes is shown in Fig. \ref{fig:peps_sg}. 
One first extends the physical legs of all tensors, towards bottom left, and terminate outside the network, and then orders the fermion modes starting from the top left tensor, counter-clock wise, until the bottom right tensor.  
This defines the computational basis of the many-fermion Hilbert space:
\begin{equation}
  \ket{\vec s} \equiv c_1^{\dag s_1} c_2^{\dag s_2} \cdots c_{N-1}^{\dag s_{N-1}} c_{N}^{\dag s_N} \ket{\vac}
  \label{eq:basis}
\end{equation}
where $s_i = 0$ or 1 specifying whether the $i$-th fermion mode is occupied or not.  
$N$ is the total number of complex fermion modes, and 
$\ket{\vac}$ is their common vacuum. 

An fPEPS is made of $\Z_2$-symmetric tensors, where each virtual leg, indexed with $k = 1, 2, \cdots, D$, is labelled with a parity charge $P_k = 0$ or 1. 
At every site $i$, one requires the $5-$leg fPEPS tensor to have even parity:  
\begin{equation}
  A[i]_{lrdu}^{s} = 0 
  \text{ if } P_l + P_r + P_d + P_u  \not= P_s \text{ mod } 2
  \label{eq:Z2}
\end{equation}
where $l, r, d, u$ are the indices for the left, right, down, and up virtual legs of $A$ at site $i$.
As a physical leg, $s$ is extended towards the bottom left, and every time it crosses another leg with index $k$, a swap gate is inserted:  
\begin{equation}
  \raisebox{-0.5\height}{\includegraphics[height=4.5em]{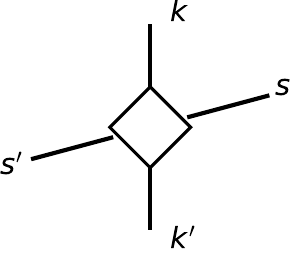}} = (-1)^{P_s P_k} \delta_{k k'}\delta_{s s'}
  \label{eq:sg}
\end{equation}
A fPEPS state $\ket{\Psi}$ is defined via its amplitude in the basis $\ket{\vec s}$:
$\braket{\vec s|\Psi}$ equals the contraction of the 2D swap-gates-decoreted tensor network, with physical indices projected to $\vec s$.  

The fPEPS defined above has two features: 
\begin{enumerate}
    \item Well-definedness: as one extends a physical leg, there are different ways by which virtual bonds are crossed by the physical leg.  
      The different ways are equivalent due to the gauge constraint Eq. \ref{eq:Z2}.   
    \item Locality: when acted on by a fermion operator in a local region, even if separated by a long JW string, e.g. $c_5^\dag c_9$ in Fig. \ref{fig:peps_sg}, only the tensors in that local region change. 
\end{enumerate}
See the appendix of Ref. \cite{fisoTNS} for a detailed explanation of the above. 
In our algorithm, we adopt the configuration drawn in Fig. \ref{fig:peps_sg}, where the physical legs are bent flat enough so that they only cross vertical virtual legs, and do not cross each other. 

Computing the amplitude $\braket{\vec s|\Psi}$ is the core of the VMC algorithm.
We explain how to do it in the next section.

\subsection{spinful fermions}
For spinful fermions, the local physical dimension is 4: $s_i  = (s_{i\up}, s_{i\down})$, and the fermion modes in $\ket{\vec s}$ are ordered as $s_{1\up}, s_{1\down}, s_{2\up}, s_{2\down}, \cdots$.
One can encode $\ket{00}, \ket{\uparrow0}, \ket{\uparrow\downarrow}, \ket{0\downarrow}$ as $s=0,1,2,3$ so that its parity $P_s = s \text{ mod } 2$.

\subsection{number conservation}
In the framework of sampling, a definite particle number $M$ 
can be imposed by only sampling over product states $\vec s$ with particle number equal to $M$.
This approach is equivalent to computing the property of $P_M\ket{\Psi}$ instead of $\ket{\Psi}$, where $P_M$ is the projector to the total number sector $M$.
Implementing the projector by sampling instead of tensors increases the representability of the variational ansatz over bare fPEPS.
We adopt number conservation this way in Sec. \ref{sec:examples}. 

\section{Variational Monte Carlo of fermion PEPS}
\label{sec:vmc}
In VMC, the expectation value of an operator $K$ is calculated as \begin{equation}
  \frac{\braket{\Psi|K|\Psi}}{\braket{\Psi|\Psi}} = \sum_{\vec s} \frac{\abs{\braket{\vec s|\Psi}}^2}{\braket{\Psi|\Psi}} \frac{\braket{\vec s|K|\Psi}}{\braket{\vec s|\Psi}} \equiv \sum_{\vec s} p(\vec s) K_\loc(\vec s)
\end{equation}
which is obtained via sampling $\vec s$ from the probability distribution $p(\vec s) = \frac{\abs{\braket{\vec s|\Psi}}^2}{\braket{\Psi|\Psi}}$, and averaging over the estimator $K_\loc(\vec s) = \frac{\braket{\vec s|K|\Psi}}{\braket{\vec s|\Psi}}$. 

To obtain $\ket{\Psi}$ as an approximate ground state of a Hamiltonian $H$, one performs optimization to minimize $\frac{\braket{\Psi|H|\Psi}}{\braket{\Psi|\Psi}}$ based on the gradients of $\braket{H}$ with respect to the complex conjugate \footnote{An fPEPS is a holomorphic ansatz.} (denoted by an overhead bar) of the parameters $\theta_\alpha$ in the wavefunction ansatz:
\begin{equation}
  g_\alpha \equiv \frac{\partial \braket{H}}{\partial \overline{\theta_\alpha}} = [\overline{O_{\alpha}(\vec s)} H_\loc(\vec s)] - [\overline{O_{\alpha}(\vec s)}][H_\loc(\vec s)]
\end{equation}
where $[\cdot]$ denote statistical average with respect to $p(\vec s)$. 
$O_{\alpha}(\vec s)$ is the log-derivative of $\braket{\vec s|\Psi}$ with respect to $\alpha$
\begin{equation}
  O_{\alpha}(\vec s) \equiv \frac{\partial \ln \braket{\vec s|\Psi}}{\partial \theta_\alpha} =  \frac{\partial\braket{\vec s|\Psi}/\partial \theta_\alpha}{\braket{\vec s|\Psi}}
\end{equation}

One thus needs four subroutines in a VMC calculation:  
\begin{enumerate}
  \item computing the amplitude $\braket{\vec s |\Psi}$
  \item computing the log-derivative $\frac{\partial\braket{\vec s| \Psi}/\partial \theta_\alpha}{\braket{\vec s| \Psi}}$  
  \item computing the local energy $\frac{\braket{\vec s|H|\Psi}}{\braket{\vec s|\Psi}}$ 
  \item a Monte Carlo update to sample $\vec s$ from $p(\vec s)$ 
\end{enumerate}
We now detail the above in the context of fPEPS. 
We consider an fPEPS of size $L_x \times L_y$ and bond dimension $D$. 

\subsection{fPEPS tensors after sampling}

For any given sample $\mathbf{s}$, the index of each physical leg at site $i$ is fixed to $s_{i}$. The 5-leg tensor $A[i]_{lrdu}^{s}$ then becomes a 4-leg tensor $\tilde{A}[i]_{lrdu} = A[i]_{lrdu}^{s_{i}}$ with a total $\mathbb{Z}_2$ charge equals $P_{s_{i}}$

\begin{align}
\tilde{A}[i]_{lrdu} = 0, \text{if } P_l + P_r + P_d + P_u \not= P_{s_{i}} \text{ mod } 2
\end{align}

Pictorially, we represent the tensor after sampling by changing the physical legs to dashed lines.
A dashed line is no longer an index of the tensor but a fixed number to book-keep its $\mathbb{Z}_2$ charge.
We draw it when it is needed for swap gates.

Correspondingly, for a given $s$, the swap gate reduces to a gate in the Hilbert space of the virtual leg
\begin{equation}
  \raisebox{-0.5\height}{\includegraphics[height=4.5em]{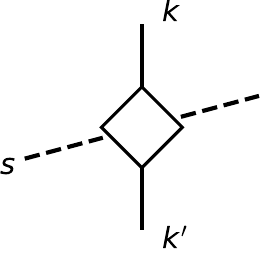}} = (-1)^{P_s P_k} \delta_{k k'}
  \label{eq:sg}
\end{equation}

\subsection{Computation of the amplitude and the boundary MPS environment}
The amplitude $\braket{\vec s|\Psi}$ can be evaluated using boundary MPS (bMPS) contraction.
This needs the bMPS left environments (\LE) or right environments (\RE). 
Both $\LE$ and $\RE$ are a list of bMPS with maximal bond dimension $D' > D$. 
$\texttt{\LE[0]}$ is taken as the trivial MPS, and $\texttt{\LE[x]}$ is then iteratively computed for $x=1$ to $L_x$ as
\begin{equation}
  \raisebox{-0.5\height}{\includegraphics[height=12em]{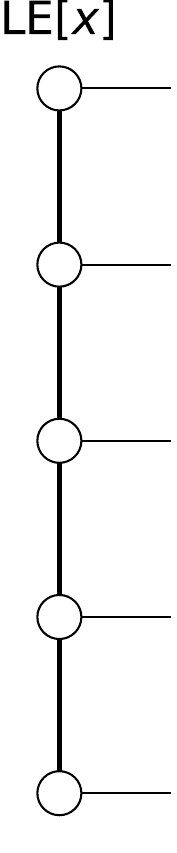}} \hspace{5mm} \approx \hspace{5mm} \raisebox{-0.5\height}{\includegraphics[height=12em]{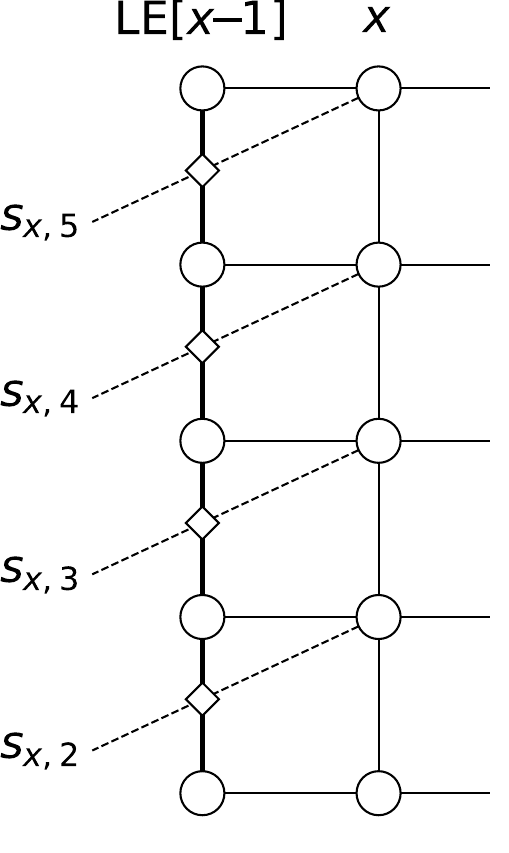}}
  \label{eq:LE}
\end{equation}
where $s_{x,y}$ is the value of the sample at the coordinate $(x,y)$.

Similarly, $\texttt{\RE[Lx+1]}$ is the trivial MPS, and $\texttt{\RE[x]}$ is iteratively computed for $x = L_x$ to $1$ as 
\begin{equation}
  \raisebox{-0.5\height}{\includegraphics[height=12em]{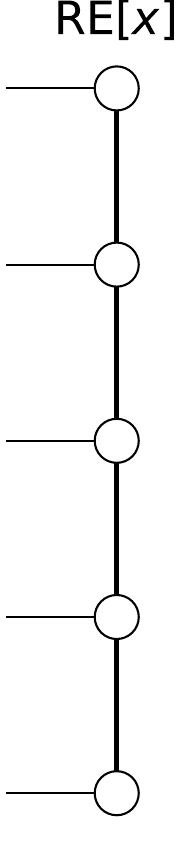}} \hspace{5mm} \approx \hspace{5mm} \raisebox{-0.5\height}{\includegraphics[height=12em]{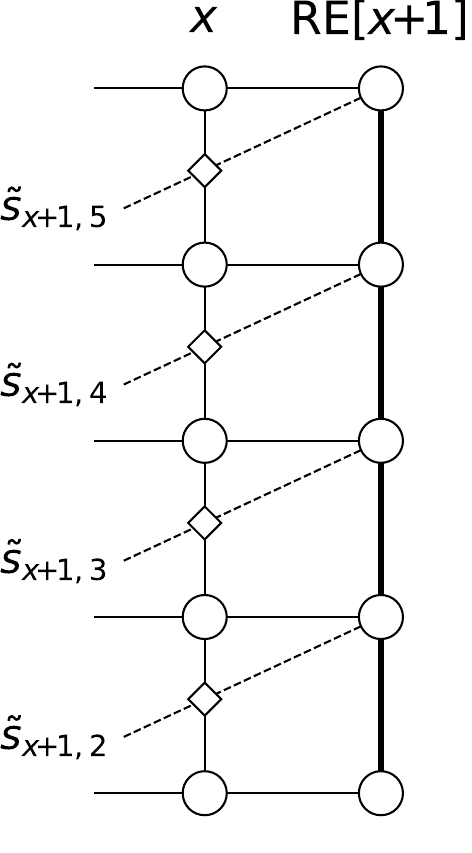}}
  \label{eq:RE}
\end{equation}
where $\tilde{s}_{x,y} \equiv (\sum_{x' \ge x} s_{x,y}) \mod 2$ is the sample assoicated with $\texttt{RE[x][y]}$.  
It can be updated as $\tilde{s}_{x,y} = (s_{x,y} + \tilde{s}_{x+1,y}) \mod 2$.

Eq.~\ref{eq:LE} and \ref{eq:RE} are both implemented via MPO-MPS variational compression with complexity $O(D'^3D^2 + D'^2 D^4)$ \cite{liu2021}.
The amplitude is then computed as
\begin{equation}
  \braket{\vec s|\Psi} = \hspace{5mm} \raisebox{-0.5\height}{\includegraphics[height=12em]{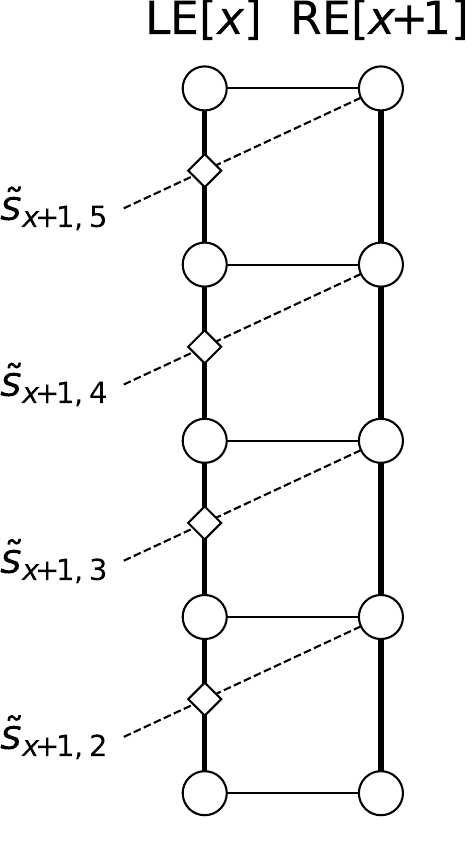}}
  \hspace{5mm} \text{for any $x$}
\end{equation}
\subsection{Computation of the log-derivative}
The variational parameters of an fPEPS are the $\Z_2$ symmetric tensors $A[i]_{lrdu}^{p_i}$ with $i$ labeling the sites.  
Given sample $\vec s$, the log-derivative of $\braket{\vec s|\Psi}$ with respect to $A[i]$ is 0 for $p_i \not= s_i$. 
For $p_i = s_i$, the log-derivative is computed as the contraction of the 2D network with a hole at site $i$:   
\begin{equation}
  \frac{\partial \ln \braket{\vec s|\Psi}}{\partial A_{lrdu}^{s_i}} = \raisebox{-0.5\height}{\includegraphics[height=10em]{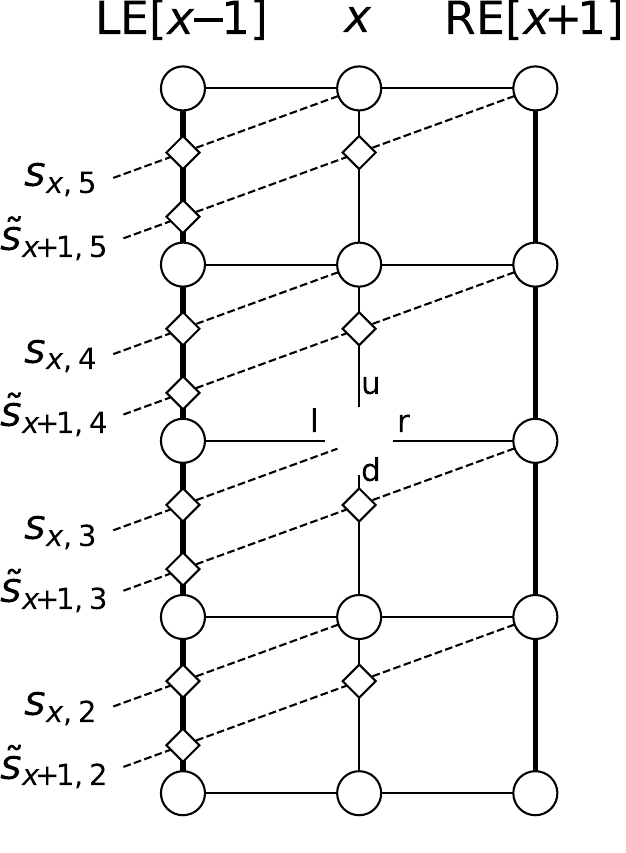}}  \divisionsymbol \hspace{2mm} \raisebox{-0.5\height}{\includegraphics[height=10em]{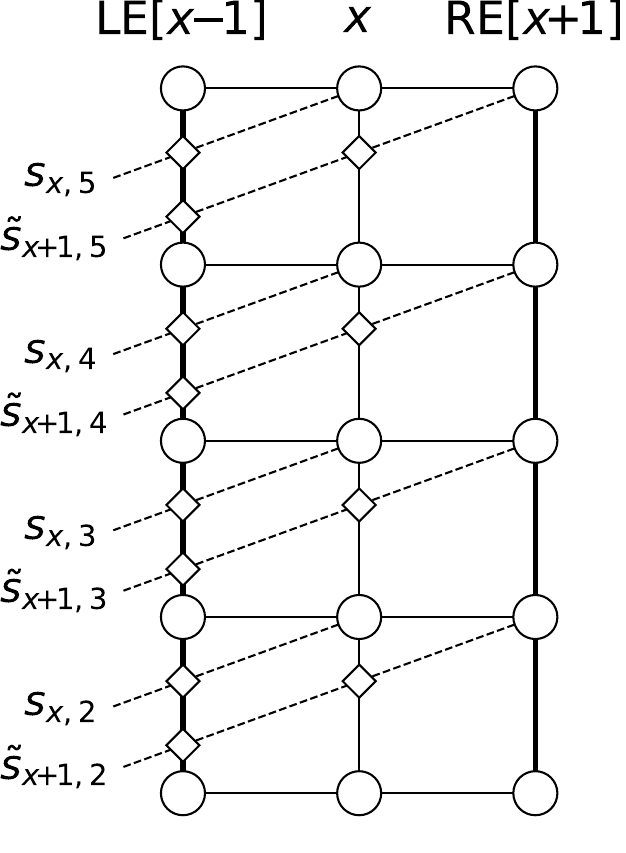}}
\end{equation}
This can be contracted exactly with complexity $O(D'^3D^2 + D'^2 D^4)$.

\subsection{Computation of the local energy}
Consider a short-range Hamiltonian with fermion hopping and an interaction which is diagonal in the basis $\mathbf{s}$. For diagonal terms $O$ in the Hamiltonian, the local energy is just its diagonal element $\frac{\braket{\vec s|O|\Psi}}{\braket{\vec s|\Psi}} = O_{\vec{s},\vec{s}}$. Thus, we only need to compute the local energy for the hoppings.
To compute $\frac{\braket{\vec s|c_i^\dag c_j|\Psi}}{\braket{\vec s|\Psi}}$, one acts the hopping to $\bra{\vec s}$, and account for the JW sign properly: $\bra{\vec s}c_i^\dag c_j = (-1)^{\#_{ij}}\bra{\vec s'}$, where $\#_{ij}$ is the number of fermions in $\vec s$ between (excluding) $i$ and $j$ in the JW string.
\begin{equation}
  \frac{\braket{\vec s|c_i^\dag c_j|\Psi}}{\braket{\vec s|\Psi}} = (-1)^{\#_{ij}}\frac{\braket{\vec s'|\Psi}}{\braket{\vec s|\Psi}} 
\end{equation}
where $\vec s'$ and $\vec s$ only differ locally and $\braket{\vec s|\Psi}$ and $\braket{\vec s'|\Psi}$ should be computed with the same $\LE$ and $\RE$. 
\subsection{Monte Carlo update}
The locality of an fPEPS allows extensive reuse of $\LE$ and $\RE$.
For any neighboring sites $i$ and $j$, one can perform a 2-site local update: if a particle occupies one and only one of the two sites, swap the occupation and propose the post-swap configuration $\vec s'$, and accept $\vec s'$ with probability $\min\{1, \frac{\abs{\braket{\vec{s'}|\Psi}}^2}{\abs{\braket{\vec s|\Psi}}^2}\}$; if not, do nothing and keep the current $\vec s$.

One MC sweep is defined as performing one 2-site local update for every nearest-neighbor vertical and horizontal bond on the lattice. 
The order of doing the local updates is sequential: one starts from the site at $(x,y) = (1,1)$, and then move up to $(1,2)$, until $(1,L_y)$; and then move to the next column at $(2,1)$, $\cdots$ until the top right corner of the lattice. 
This MC procedure is called {\it sequential sampling}, and was first proposed in Ref. \cite{liu2021}.
Its benefit is the ability to reuse $\LE$ and $\RE$ during the MC sweep. 
The complexity of one MC sweep is $O(L_x L_yD'^2D^4)$.

It is, however, not obvious that sequential sampling is reversible, i.e. satisfying detailed balance, due to an apparent sequence in the MC. 
This has remained a confusion thus far.
In the next section, we prove that sequential sampling satisfies detailed balance, and thus is guaranteed to converge to a unique stationary distribution $p(\vec s)$ if the MC update is also irreducible and aperiodic. 

\subsection{Detailed balance of sequential sampling}
Consider a system of $N$ spins or fermion modes. 
Let the sample space $\Omega$ be the product of $N$ local sample spaces $\Omega_i$. 
Let $\vec s  = (s_1, s_2, \cdots, s_N) \in \Omega$, where $s_i  \in \Omega_i$. 
Given the target probability distribution $\pi(\vec s)$, the sequential sampling procedure is defined as follow:  
\\
\\
\begin{algorithm}[H]
\caption{Sequential sampling}
\For{$i \gets 1$ \KwTo $N$}{
    Denote the spin configuration as $\vec s$\;
    Fix all spins excluding spin $i$: $\vec s_{\not= i}$\;
    Sample new spin $s'_i$ with transition probability $p_i(s_i \shortarrow s'_i | \vec s_{\not=i})$ that satisfies the conditional detailed balance, restricted to spin $i$:
    \vspace{-0.3cm}
    \begin{align*}
      p_i&(s_i \shortarrow s'_i | \vec s_{\not= i})\pi(\vec{s}_{1:i-1}, s_i, \vec{s}_{i+1:N}) \\
     &= p_i(s'_i \shortarrow s_i| \vec s_{\not= i})\pi(\vec{s}_{1:i-1}, s'_i, \vec{s}_{i+1:N})
  \end{align*}
}
\label{algo:sequential}
\end{algorithm}
\noindent where $\vec{s}_{a:b} \equiv (s_a, s_{a+1}, \dots, s_b)$.

\begin{theorem}
  Algorithm \ref{algo:sequential} (sequential sampling) satisfies detailed balance with $\pi(\vec s)$ as the stationary distribution.
\end{theorem}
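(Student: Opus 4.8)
The plan is to reduce the statement to the per-site conditional detailed balance that Algorithm~\ref{algo:sequential} already assumes. Let $P_i$ denote the Markov kernel on $\Omega$ induced by the $i$-th step of the sweep: it resamples only coordinate $i$ with probability $p_i(s_i \shortarrow s_i' \mid \vec s_{\neq i})$ and freezes every other coordinate, so that $P_i(\vec s \shortarrow \vec s') = 0$ unless $\vec s_{\neq i} = \vec s'_{\neq i}$. One full sweep is the composition $P = P_1 P_2 \cdots P_N$, and the goal is to show that $\pi$ is the stationary distribution of $P$.

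First I would show that each $P_i$ individually leaves $\pi$ invariant. Because $P_i$ changes nothing outside coordinate $i$, the sum defining $(\pi P_i)(\vec s')$ collapses to a sum over the single variable $s_i$ with all other coordinates fixed to $\vec s'_{\neq i}$. Applying the conditional detailed balance relation termwise rewrites $\pi(\dots, s_i, \dots)\, p_i(s_i \shortarrow s'_i \mid \vec s'_{\neq i})$ as $\pi(\dots, s'_i, \dots)\, p_i(s'_i \shortarrow s_i \mid \vec s'_{\neq i})$, after which the residual sum $\sum_{s_i} p_i(s'_i \shortarrow s_i \mid \vec s'_{\neq i}) = 1$ by normalization, leaving $(\pi P_i)(\vec s') = \pi(\vec s')$. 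Stationarity of the whole sweep is then immediate by composition, $\pi P = \pi P_1 P_2 \cdots P_N = \pi P_2 \cdots P_N = \cdots = \pi$, and together with irreducibility and aperiodicity the fundamental theorem of Markov chains yields convergence to the unique $\pi$.

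The delicate point --- and the origin of the confusion the paper names --- is that the sweep kernel $P$ is \emph{not} detailed-balanced in the literal sense $\pi(\vec s)\, P(\vec s \shortarrow \vec s') = \pi(\vec s')\, P(\vec s' \shortarrow \vec s)$, since a composition of $\pi$-reversible kernels need generically not be $\pi$-reversible. What is true, and what suffices, is that detailed balance holds step by step, and this per-step balance is exactly what pins $\pi$ as the invariant law. To state the reversibility precisely I would pass to the operator picture: conditional detailed balance says each $P_i$ is self-adjoint on $L^2(\pi)$, so the $\pi$-adjoint of the forward sweep is $P^\ast = P_N \cdots P_1$; equivalently, the time-reversed chain is the same sampler run in the opposite site order. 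This is the sharp sense in which sequential sampling ``satisfies detailed balance,'' and it makes clear why the apparent directionality of the scan does not spoil convergence to $\pi$.

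I expect the single genuinely careful step to be the invariance of one $P_i$: one must check that the conditioning configuration $\vec s_{\neq i}$ appearing on the two sides of the conditional detailed balance relation is identical --- which it is, precisely because $P_i$ holds those coordinates fixed --- so that the sum over the updated coordinate telescopes cleanly through normalization rather than mixing different conditionings.
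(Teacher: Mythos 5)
Your proof is correct, but it takes a genuinely different route from the paper's, and the comparison is instructive. The paper multiplies the $N$ conditional detailed-balance identities along the unique intermediate path of one sweep, so the $\pi$-factors telescope and it obtains in one line $\pi(\vec s)\,p(\vec s \shortarrow \vec s') = \pi(\vec s')\,\tilde p(\vec s' \shortarrow \vec s)$. Read carefully, however, the product appearing on the right-hand side of the paper's $N=3$ display, $p_1(s'_1 \shortarrow s_1|s_2,s_3)\,p_2(s'_2 \shortarrow s_2|s'_1,s_3)\,p_3(s'_3 \shortarrow s_3|s'_1,s'_2)$, is the transition probability of the sweep run in the \emph{reversed} site order $N,\dots,1$, not of the forward sweep started from $\vec s'$ (whose first factor would instead be $p_1(s'_1\shortarrow s_1|s'_2,s'_3)$). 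So what the paper's multiplication actually establishes is precisely your operator statement $P^\ast = P_N \cdots P_1$: the forward and reversed-order sweeps are $\pi$-adjoints of each other, which still yields stationarity upon summing over $\vec s$, since the reversed sweep is normalized. Your route --- proving $\pi P_i = \pi$ for each single-site kernel via conditional detailed balance plus normalization of $p_i$, then composing --- reaches the same conclusion without ever asserting reversibility of the composed kernel, and your explicit caveat that $P = P_1 \cdots P_N$ is generically \emph{not} reversible (self-adjointness is not preserved under composition of non-commuting kernels) is a genuine sharpening of the theorem's wording: the paper's final equation is literally true only when its right-hand $p$ is understood as the reversed-order sweep. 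What the paper's approach buys is brevity and an explicit pathwise identity; what yours buys is the clean, fully general invariance argument plus the correct identification of the time-reversed chain, which is exactly the point needed to dispel the ``apparent sequence'' confusion the paper set out to address.
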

\begin{proof}
  We take $N=3$ for notational simplicity. 
  It should be obvious that the proof holds for other $N$. 
  Consider $\vec s = (s_1, s_2, s_3)$ and $\vec s' = (s'_1, s'_2, s'_3)$ before and after one step of sequential sampling.  
  By the conditional detailed balance: 
  \begin{align*}
    p_1(s_1\shortarrow s'_1|s_2, s_3) \pi(\vec s) &= p_1(s'_1\shortarrow s_1| s_2, s_3) \pi(s'_1, \vec s_{2:3})
\\
p_2(s_2 \shortarrow s'_2 |s'_1, s_3) \pi(s'_1, \vec s_{2:3}) &= p_2(s'_2\shortarrow s_2 |s'_1, s_3) \pi(\vec s'_{1:2}, s_3)
\\
p_3(s_3 \shortarrow s'_3 |s'_1, s'_2) \pi(\vec s'_{1:2}, s_3) &= p_3(s'_3\shortarrow s_3 |s'_1, s'_2) \pi(\vec s')
  \end{align*}
Multiplying the three equations gives the detailed balance of sequential sampling: 
\begin{equation}
    p(\vec s \shortarrow \vec s') \pi(\vec s) = p(\vec s' \shortarrow \vec s) \pi(\vec s')
\end{equation}
where $p(\vec s \shortarrow \vec s')$ is the transition probability from $\vec s$ to $\vec s'$ in one step of sequential sampling.
\end{proof}
This proof easily generalizes to the case where the conditional sampling is used for multiple spins in a local region instead of a single site. 

\subsection{Stochastic reconfiguration}
To obtain the ground state, in the context of VMC, one often uses the method of stochastic reconfiguration (SR), which we also adopt in Sec. \ref{sec:examples}.  
SR is extensively discussed and documented in the literature \cite{SR1998}. 
Its application to fPEPS is no different. 
We document its detail here for completeness. 

Let $\vec \theta$ be a complex vector containing all of variational parameters in an fPEPS. 
Instead of doing steepest descent: $\theta'_\alpha = \theta_\alpha - \delta\cdot\frac{\partial \braket{H}}{\partial \overline{\theta_\alpha}}$, where $\delta$ is the descent step size (also called the {\it learning rate}), one descends along the direction $\vec X$, i.e. $\vec{\theta}' = \vec\theta - \delta \cdot \vec X$, where $\vec X$ solves the matrix equation  
\begin{equation}  
  \sum_\beta S_{\alpha\beta} X_{\beta} = g_\alpha = \frac{\partial \braket{H}}{\partial \overline{\theta_\alpha}}
\end{equation}
where $S_{\alpha\beta}$ is the quantum metric \cite{QGT}. 
It is the unique metric on the variational manifold under which the geodesic distance between two points is proportional to the 2-norm distance between the two corresponding quantum states in the Hilbert space.
It can be obtained via sampling:
\begin{equation}
  S_{\alpha\beta} = [\overline{O}_\alpha(\vec s) O_{\beta}(\vec s)] - [\overline{O}_\alpha(\vec s)][O_{\beta}(\vec s)]
\end{equation}
The quantum metric matrix $S$ is almost always ill-conditioned.  
In practice, $\vec X$ is obtained via solving $(S + \epsilon) \vec X = \vec g$, where $\epsilon$ is a small regulator to $S$. 
\section{Examples}
\label{sec:examples}
\subsection{Free Fermion Benchmark}
Here we consider several paradigmatic free fermion examples to benchmark the VMC algorithm of fPEPS: sublattice band insulator (Insulator), Dirac semimetal (Dirac), $\phi = 2\pi/3$ Hofstadter model with Chern number 1 (Chern), and a system with a Fermi surface (Fermi):
\begin{equation}
    H = -\sum_{\langle ij \rangle} t_{ij} c_i^{\dagger} c_j - \sum_{i} \mu_i c_i^{\dagger} c_i\\
\end{equation}
In all systems, $|t_{ij}| = 1$. 
For Chern, the phases of $t_{ij}$ set the flux. 
For Dirac, the vertical hoppings are 1 ($-1$) along even (odd) columns, while the horizontal hoppings are all $1$.
For Insulator and Fermi, all $t_{ij} = 1$. 
We set $\mu_i = \pm 1$ for the two sublattices for Insulator; $\mu_i = 0$ for other models.
For Insulator, Dirac, and Fermi, we fix the particle number at half filling.  
For Chern, we fix the particle number at 2/3 filling.

We list the hyper-parameters of the VMC algorithm in Table \ref{tab:parameters}.
Excessive fine tuning is not needed for these parameters.
\begin{table}[h]
\centering
\begin{tabular}{l@{\hskip 1cm}l}
\hline
\textbf{Parameter} & \textbf{Value} \\
\hline
fPEPS bond dimension $D$            & 4             \\
bMPS bond dimension $D'$      & 12 (20 for Fermi) \\
SR regulator $\epsilon$       & 0.001         \\
step size $\delta$            & 0.1           \\
number of samples                    & 4096          \\
tensor initialization                    & random in [0,1]        \\
\hline
\end{tabular}
\caption{Hyperparameters used in the VMC.}
\label{tab:parameters}
\end{table}

As shown in Fig. \ref{fig:10x10}, fPEPS VMC works very well for $10\times 10$ Insulator and Dirac, and $12 \times 12$ Chern, reaching energy density error on the order of $10^{-5}$.  
For Fermi, the calculation is more challenging as expected for its large entanglement and large density of states near the ground state. 
\begin{figure}[htbp]
  \centering
  \begin{subfigure}[t]{0.23\textwidth}
    \includegraphics[width=\linewidth]{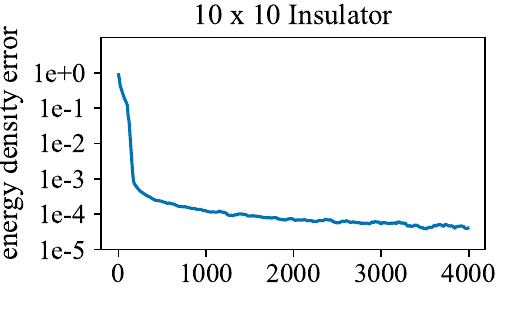}
  \end{subfigure}%
  \begin{subfigure}[t]{0.23\textwidth}
    \includegraphics[width=\linewidth]{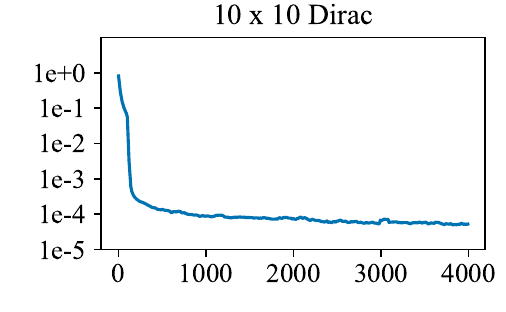}
  \end{subfigure}
  \begin{subfigure}[t]{0.23\textwidth}
    \includegraphics[width=\linewidth]{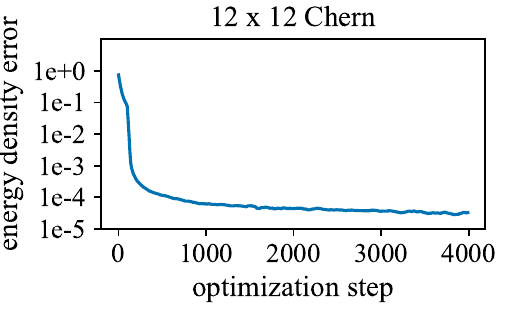}
  \end{subfigure}%
  \begin{subfigure}[t]{0.23\textwidth}
    \includegraphics[width=\linewidth]{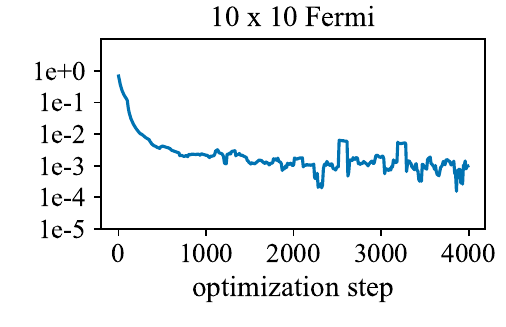}
  \end{subfigure}
  \caption{Energy density error $\frac{E-E_\text{exact}}{L^2}$ as the VMC optimization steps increase.
  The data are presented as the moving average over every 100 steps.
  At each step, the energy is obtained via averaging over 4096 samples.}
  \label{fig:10x10}
\end{figure}

An interesting comparison is with boson PEPS representing the ground state of the fermion systems.
Given a fixed computational basis, one may choose to represent the wavefunction amplitude of the fermion system as a boson PEPS, without the swap gates.
To find the ground state, one needs to perform a Jordan-Wigner transformation to get a boson Hamiltonian and run the bosonic algorithms.
In the VMC algorithm, the long-rangeness of the JW string does not pose any algorithmic difficulty in computing the energy gradient, since it is diagonal in the basis of sampling. This differs from traditional PEPS methods, such as simple or full update \cite{simple_update, full_update_1, full_update_2}.
For the VMC algorithm, one expects the difference between boson PEPS and fPEPS to come only from the different representing power of the ansatz.
Since the fPEPS ansatz maintains locality under local gate actions, we expect it to be a better ansatz for ground states of local fermion Hamiltonians, which is indeed what we observe in Fig.~\ref{fig:4x4}.
In fact, the performance of boson PEPS is so bad that we could not even converge the calculation at $10 \times 10$ on first tries, so we use $4\times4$ ($6 \times 6$ for Chern) as illustrating examples. 
We expect the difference to be even more pronounced for larger system size.

\begin{figure}[htbp]
  \centering

  \begin{subfigure}[t]{0.23\textwidth}
    \includegraphics[width=\linewidth]{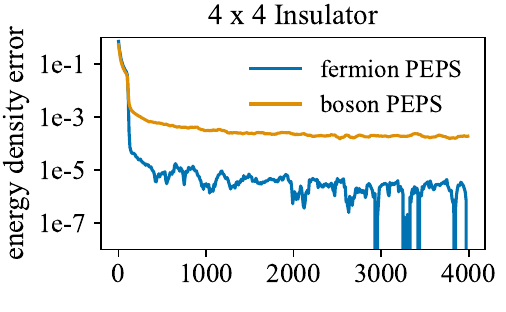}
  \end{subfigure}%
  \begin{subfigure}[t]{0.23\textwidth}
    \includegraphics[width=\linewidth]{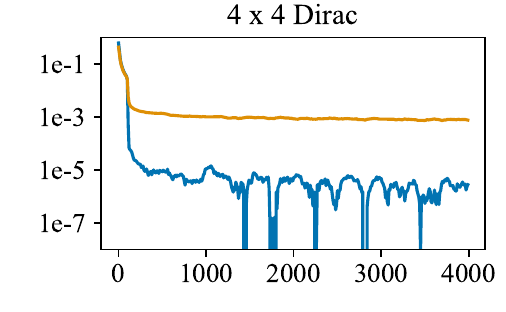}
  \end{subfigure}

  \vspace{1em}

  \begin{subfigure}[t]{0.23\textwidth}
    \includegraphics[width=\linewidth]{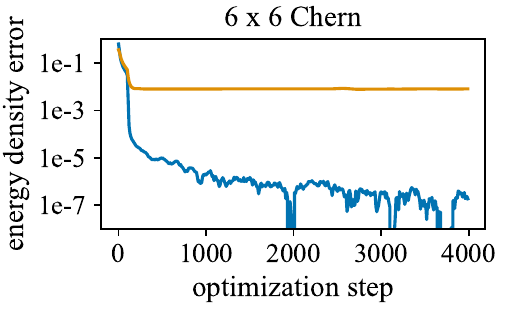}
  \end{subfigure}%
  \begin{subfigure}[t]{0.23\textwidth}
    \includegraphics[width=\linewidth]{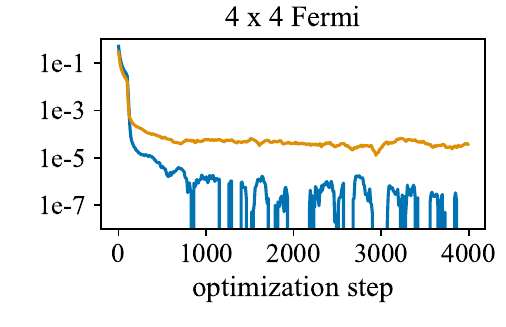}
  \end{subfigure}
  \caption{Comparison between fermion PEPS versus boson PEPS in terms of their performance in VMC of fermion systems. 
  Energy density error $\frac{E-E_\text{exact}}{L^2}$ is shown as the VMC optimization steps increase.
  The data are presented as the moving average over every 100 steps.
  At each step, the energy is obtained via averaging over 4096 samples.}
  \label{fig:4x4}
\end{figure}

\subsection{Hubbard Model Benchmark}
We also include a benchmark on the $4\times 4$ Hubbard model with $U=8$ at half-filling.
Its Hamiltonian is
\begin{equation}
  H = -\sum_{\braket{ij}, \sigma} c_{i,\sigma}^\dag c_{j,\sigma} + U \sum_i n_{i,\up} n_{i,\down}
\end{equation}
where $c_{i,\sigma}$ is the fermion operator on site $i$ with spin $\sigma = \uparrow$ or $\downarrow$.
We work at half-filling: $\sum_i n_{i,\up} = \sum_j n_{j,\down} = 8$.

The VMC results is shown in Fig. \ref{fig:hubbard}.  
As seen, the energy error is significantly larger than the free Fermion models, indicating large entanglement.
As a comparison, a recent fPEPS study using exact gradient optimization and double-layer contraction reports an energy density error of $4\times10^{-3}$ with $SU(2)$ symmetry at $D=6$, and $2\times 10^{-2}$ with $U(1)$ symmetry at $D=7$ \cite{scheb2023finite}. 

Another application of the current algorithm to the $8\times8$ Hubbard with next-nearest neighbor hopping at $1/8$ hope hopping is reported in \cite{gu2025solving} with bond dimension $D=10$, showing comparable accuracy to the state-of-the-arts neural quantum state ansatz \cite{gu2025solving}.     
\begin{figure}[htb]
  \centering
  \includegraphics[width=\linewidth]{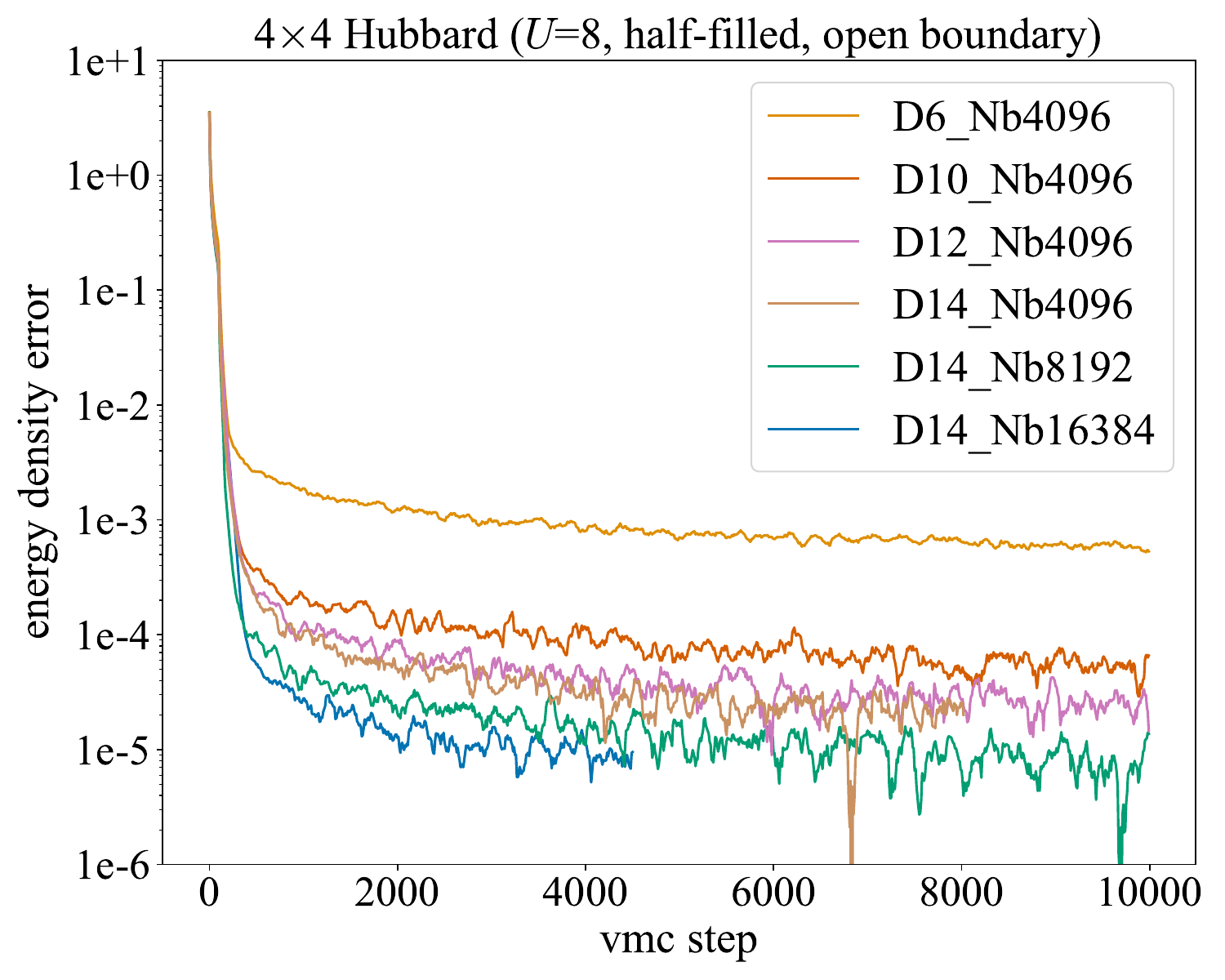}
  \caption{
  Energy density error $\frac{E-E_\text{exact}}{L^2}$ as the VMC optimization steps increase at various bond dimension $D$ and batch size $N_b$.
  $D'$ is taken to be $4D$; the other VMC hyperparameters are the same as in Table \ref{tab:parameters}.
  The data are presented as the moving average over every 100 steps.
  At each step, the energy is obtained via averaging over 4096 samples.
  $E_\text{exact}=-6.8084144664$ is found via exact diagonalization.
  }
  \label{fig:hubbard}
\end{figure}

\section{Conclusion} 
\label{sec:conclusion}
In this paper, we derived and explained the VMC algorithm for fermion PEPS in the swap gate formulation, showing its algorithmic simplicity and elegance. 
We also presented data demonstrating the effectiveness of the algorithm.
We proved the detailed balance for the sequential sampling, which has remained a confusion in the field.

The main future direction of this work is the application of this algorithm to challenging interacting fermion systems, such as the chiral spin liquid and superconductivity in Hubbard model. 
In addition to application, an important open question is to develop a framework where systems with a Fermi surface can be reliably simulated. 
This would require some non-trivial breakthrough from the PEPS structure, but should be possible in the flexible framework of VMC.

\begin{acknowledgements}
Part of the algorithm is tested using the TenPy code base \cite{hauschild2018efficient, johannes2024tensor}.
Y.W. is supported by a startup grant from the IOP-CAS.
Y.W. thanks Tao Xiang and Lei Wang for allowing him to use their computing resources.
The data that support the findings of this article are openly available \cite{data}; embargo periods may apply.
\end{acknowledgements}
\bibliography{ref.bib}

@misc{data,
  author       ={Wu, Yantao and Dai, Zhehao},
  title        = {github data repository},
  year         = {2025},
  howpublished = {\url{https://github.com/yantaow/open_data/tree/main/wu2025algorithm}},
}

@misc{gu2025solving,
      title={Solving the Hubbard model with Neural Quantum States}, 
      author={Yuntian Gu and Wenrui Li and Heng Lin and Bo Zhan and Ruichen Li and Yifei Huang and Di He and Yantao Wu and Tao Xiang and Mingpu Qin and Liwei Wang and Dingshun Lv},
      year={2025},
      eprint={2507.02644},
      archivePrefix={arXiv},
      primaryClass={cond-mat.str-el},
      url={https://arxiv.org/abs/2507.02644}, 
}

@article{scheb2023finite,
  title = {Finite projected entangled pair states for the Hubbard model},
  author = {Scheb, M. and Noack, R. M.},
  journal = {Phys. Rev. B},
  volume = {107},
  issue = {16},
  pages = {165112},
  numpages = {28},
  year = {2023},
  month = {Apr},
  publisher = {American Physical Society},
  doi = {10.1103/PhysRevB.107.165112},
  url = {https://link.aps.org/doi/10.1103/PhysRevB.107.165112}
}

@article{loh1990sign,
  title = {Sign problem in the numerical simulation of many-electron systems},
  author = {Loh, E. Y. and Gubernatis, J. E. and Scalettar, R. T. and White, S. R. and Scalapino, D. J. and Sugar, R. L.},
  journal = {Phys. Rev. B},
  volume = {41},
  issue = {13},
  pages = {9301--9307},
  numpages = {0},
  year = {1990},
  month = {May},
  publisher = {American Physical Society},
  doi = {10.1103/PhysRevB.41.9301},
  url = {https://link.aps.org/doi/10.1103/PhysRevB.41.9301}
}

@Article{QGT,
author={Provost, J. P.
and Vallee, G.},
title={Riemannian structure on manifolds of quantum states},
journal={Communications in Mathematical Physics},
year={1980},
month={Sep},
day={01},
volume={76},
number={3},
pages={289-301},
issn={1432-0916},
doi={10.1007/BF02193559},
url={https://doi.org/10.1007/BF02193559}
}

@article{fisoTNS,
  title = {Fermionic Isometric Tensor Network States in Two Dimensions},
  author = {Dai, Zhehao and Wu, Yantao and Wang, Taige and Zaletel, Michael P.},
  journal = {Phys. Rev. Lett.},
  volume = {134},
  issue = {2},
  pages = {026502},
  numpages = {7},
  year = {2025},
  month = {Jan},
  publisher = {American Physical Society},
  doi = {10.1103/PhysRevLett.134.026502},
  url = {https://link.aps.org/doi/10.1103/PhysRevLett.134.026502}
}

@misc{liu2025hubbard,
      title={Accurate Simulation of the Hubbard Model with Finite Fermionic Projected Entangled Pair States}, 
      author={Wen-Yuan Liu and Huanchen Zhai and Ruojing Peng and Zheng-Cheng Gu and Garnet Kin-Lic Chan},
      year={2025},
      eprint={2502.13454},
      archivePrefix={arXiv},
      primaryClass={cond-mat.str-el},
      url={https://arxiv.org/abs/2502.13454}, 
}

@article{liu2024j1j2j3,
  title = {Tensor network study of the spin-$\frac{1}{2}$ square-lattice ${J}_{1}\text{\ensuremath{-}}{J}_{2}\text{\ensuremath{-}}{J}_{3}$ model: Incommensurate spiral order, mixed valence-bond solids, and multicritical points},
  author = {Liu, Wen-Yuan and Poilblanc, Didier and Gong, Shou-Shu and Chen, Wei-Qiang and Gu, Zheng-Cheng},
  journal = {Phys. Rev. B},
  volume = {109},
  issue = {23},
  pages = {235116},
  numpages = {15},
  year = {2024},
  month = {Jun},
  publisher = {American Physical Society},
  doi = {10.1103/PhysRevB.109.235116},
  url = {https://link.aps.org/doi/10.1103/PhysRevB.109.235116}
}

@article{liu2018gapless,
  title = {Gapless spin liquid ground state of the spin-$\frac{1}{2}$ ${J}_{1}\ensuremath{-}{J}_{2}$ Heisenberg model on square lattices},
  author = {Liu, Wen-Yuan and Dong, Shaojun and Wang, Chao and Han, Yongjian and An, Hong and Guo, Guang-Can and He, Lixin},
  journal = {Phys. Rev. B},
  volume = {98},
  issue = {24},
  pages = {241109},
  numpages = {5},
  year = {2018},
  month = {Dec},
  publisher = {American Physical Society},
  doi = {10.1103/PhysRevB.98.241109},
  url = {https://link.aps.org/doi/10.1103/PhysRevB.98.241109}
}

@article{liu2022gapless,
title = {Gapless quantum spin liquid and global phase diagram of the spin-1/2 ${J}_{1}-{J}_{2}$ square antiferromagnetic \text{Heisenberg} model},
author = {Wen-Yuan Liu and Shou-Shu Gong and Yu-Bin Li and Didier Poilblanc and Wei-Qiang Chen and Zheng-Cheng Gu},
journal = {Science Bulletin},
year = 2022,
volume = {67},
pages = {1034-1041},
issn = {2095-9273},
number = {10},
doi = {https://doi.org/10.1016/j.scib.2022.03.010},
url = {https://www.sciencedirect.com/science/article/pii/S2095927322001001},
}

@article{liu2022emergence,
  title = {Emergence of Gapless Quantum Spin Liquid from Deconfined Quantum Critical Point},
  author = {Liu, Wen-Yuan and Hasik, Juraj and Gong, Shou-Shu and Poilblanc, Didier and Chen, Wei-Qiang and Gu, Zheng-Cheng},
  journal = {Phys. Rev. X},
  volume = {12},
  issue = {3},
  pages = {031039},
  numpages = {17},
  year = {2022},
  month = {Sep},
  publisher = {American Physical Society},
  doi = {10.1103/PhysRevX.12.031039},
  url = {https://link.aps.org/doi/10.1103/PhysRevX.12.031039}
}

@article{liu2024emergent,
title = {Emergent symmetry in quantum phase transition: From deconfined quantum critical point to gapless quantum spin liquid},
journal = {Science Bulletin},
volume = {69},
number = {2},
pages = {190-196},
year = {2024},
issn = {2095-9273},
doi = {https://doi.org/10.1016/j.scib.2023.11.057},
url = {https://www.sciencedirect.com/science/article/pii/S2095927323008393},
author = {Wen-Yuan Liu and Shou-Shu Gong and Wei-Qiang Chen and Zheng-Cheng Gu},
}

@article{liu2024quantum,
  title = {Quantum Criticality with Emergent Symmetry in the Extended Shastry-Sutherland Model},
  author = {Liu, Wen-Yuan and Zhang, Xiao-Tian and Wang, Zhe and Gong, Shou-Shu and Chen, Wei-Qiang and Gu, Zheng-Cheng},
  journal = {Phys. Rev. Lett.},
  volume = {133},
  issue = {2},
  pages = {026502},
  numpages = {9},
  year = {2024},
  month = {Jul},
  publisher = {American Physical Society},
  doi = {10.1103/PhysRevLett.133.026502},
  url = {https://link.aps.org/doi/10.1103/PhysRevLett.133.026502}
}

@article{liu2021,
  title = {Accurate simulation for finite projected entangled pair states in two dimensions},
  author = {Liu, Wen-Yuan and Huang, Yi-Zhen and Gong, Shou-Shu and Gu, Zheng-Cheng},
  journal = {Phys. Rev. B},
  volume = {103},
  issue = {23},
  pages = {235155},
  numpages = {13},
  year = {2021},
  month = {Jun},
  publisher = {American Physical Society},
  doi = {10.1103/PhysRevB.103.235155},
  url = {https://link.aps.org/doi/10.1103/PhysRevB.103.235155}
}

@article{SR1998,
  title = {Green Function Monte Carlo with Stochastic Reconfiguration},
  author = {Sorella, Sandro},
  journal = {Phys. Rev. Lett.},
  volume = {80},
  issue = {20},
  pages = {4558--4561},
  numpages = {0},
  year = {1998},
  month = {May},
  publisher = {American Physical Society},
  doi = {10.1103/PhysRevLett.80.4558},
  url = {https://link.aps.org/doi/10.1103/PhysRevLett.80.4558}
}

@Article{johannes2024tensor,
	title={{Tensor network Python (TeNPy) version 1}},
	author={Johannes Hauschild and Jakob Unfried and Sajant Anand and Bartholomew Andrews and Marcus Bintz and Umberto Borla and Stefan Divic and Markus Drescher and Jan Geiger and Martin Hefel and Kévin Hémery and Wilhelm Kadow and Jack Kemp and Nico Kirchner and Vincent S. Liu and Gunnar Möller and Daniel Parker and Michael Rader and Anton Romen and Samuel Scalet and Leon Schoonderwoerd and Maximilian Schulz and Tomohiro Soejima and Philipp Thoma and Yantao Wu and Philip Zechmann and Ludwig Zweng and Roger S. K. Mong and Michael P. Zaletel and Frank Pollmann},
	journal={SciPost Phys. Codebases},
	pages={41},
	year={2024},
	publisher={SciPost},
	doi={10.21468/SciPostPhysCodeb.41},
	url={https://scipost.org/10.21468/SciPostPhysCodeb.41},
}

@article{hauschild2018efficient,
  title={Efficient numerical simulations with tensor networks: Tensor Network Python (TeNPy)},
  author={Hauschild, Johannes and Pollmann, Frank},
  journal={SciPost Physics Lecture Notes},
  pages={005},
  year={2018}
}

@article{fPEPS_VF,
  title = {Fermionic projected entangled pair states},
  author = {Kraus, Christina V. and Schuch, Norbert and Verstraete, Frank and Cirac, J. Ignacio},
  journal = {Phys. Rev. A},
  volume = {81},
  issue = {5},
  pages = {052338},
  numpages = {6},
  year = {2010},
  month = {May},
  publisher = {American Physical Society},
  doi = {10.1103/PhysRevA.81.052338},
  url = {https://link.aps.org/doi/10.1103/PhysRevA.81.052338}
}

@article{fPEPS_QC,
  title = {Contraction of fermionic operator circuits and the simulation of strongly correlated fermions},
  author = {Barthel, Thomas and Pineda, Carlos and Eisert, Jens},
  journal = {Phys. Rev. A},
  volume = {80},
  issue = {4},
  pages = {042333},
  numpages = {12},
  year = {2009},
  month = {Oct},
  publisher = {American Physical Society},
  doi = {10.1103/PhysRevA.80.042333},
  url = {https://link.aps.org/doi/10.1103/PhysRevA.80.042333}
}

@article{fPEPS_VF2,
  title = {Fermionic implementation of projected entangled pair states algorithm},
  author = {Pi\ifmmode \check{z}\else \v{z}\fi{}orn, Iztok and Verstraete, Frank},
  journal = {Phys. Rev. B},
  volume = {81},
  issue = {24},
  pages = {245110},
  numpages = {8},
  year = {2010},
  month = {Jun},
  publisher = {American Physical Society},
  doi = {10.1103/PhysRevB.81.245110},
  url = {https://link.aps.org/doi/10.1103/PhysRevB.81.245110}
}

@article{fPEPS_SG,
  title = {Simulation of strongly correlated fermions in two spatial dimensions with fermionic projected entangled-pair states},
  author = {Corboz, Philippe and Or\'us, Rom\'an and Bauer, Bela and Vidal, Guifr\'e},
  journal = {Phys. Rev. B},
  volume = {81},
  issue = {16},
  pages = {165104},
  numpages = {22},
  year = {2010},
  month = {Apr},
  publisher = {American Physical Society},
  doi = {10.1103/PhysRevB.81.165104},
  url = {https://link.aps.org/doi/10.1103/PhysRevB.81.165104}
}

@article{fPEPS_graded,
  title={Graded projected entangled-pair state representations and an algorithm for translationally invariant strongly correlated electronic systems on infinite-size lattices in two spatial dimensions},
  author={Shi, Qian-Qian and Li, Sheng-Hao and Zhao, Jian-Hui and Zhou, Huan-Qiang},
  journal={arXiv preprint arXiv:0907.5520},
  year={2009}
}

@article{fMERA_SG,
  title = {Fermionic multiscale entanglement renormalization ansatz},
  author = {Corboz, Philippe and Vidal, Guifr\'e},
  journal = {Phys. Rev. B},
  volume = {80},
  issue = {16},
  pages = {165129},
  numpages = {12},
  year = {2009},
  month = {Oct},
  publisher = {American Physical Society},
  doi = {10.1103/PhysRevB.80.165129},
  url = {https://link.aps.org/doi/10.1103/PhysRevB.80.165129}
}

@misc{fPEPS_Grassmann,
      title={Grassmann tensor network states and its renormalization for strongly correlated fermionic and bosonic states},
      author={Zheng-Cheng Gu and Frank Verstraete and Xiao-Gang Wen},
      year={2010},
      eprint={1004.2563},
      archivePrefix={arXiv},
      primaryClass={cond-mat.str-el}
}

@misc{verstraete2004renormalization,
  title = {Renormalization algorithms for Quantum-Many Body Systems in two and higher dimensions},
  author = {Verstraete, F. and Cirac, J. I.},
  year = {2004},
  eprint={0407066},
  archivePrefix={arXiv},
  primaryClass={cond-mat},
  url = {https://arxiv.org/abs/cond-mat/0407066},
}

@article{full_update_1,
  title = {Classical Simulation of Infinite-Size Quantum Lattice Systems in Two Spatial Dimensions},
  author = {Jordan, J. and Or\'us, R. and Vidal, G. and Verstraete, F. and Cirac, J. I.},
  journal = {Phys. Rev. Lett.},
  volume = {101},
  issue = {25},
  pages = {250602},
  numpages = {4},
  year = {2008},
  month = {Dec},
  publisher = {American Physical Society},
  doi = {10.1103/PhysRevLett.101.250602},
  url = {https://link.aps.org/doi/10.1103/PhysRevLett.101.250602}
}

@article{full_update_2,
  title = {Simulation of strongly correlated fermions in two spatial dimensions with fermionic projected entangled-pair states},
  author = {Corboz, Philippe and Or\'us, Rom\'an and Bauer, Bela and Vidal, Guifr\'e},
  journal = {Phys. Rev. B},
  volume = {81},
  issue = {16},
  pages = {165104},
  numpages = {22},
  year = {2010},
  month = {Apr},
  publisher = {American Physical Society},
  doi = {10.1103/PhysRevB.81.165104},
  url = {https://link.aps.org/doi/10.1103/PhysRevB.81.165104}
}

@article{simple_update,
  title = {Accurate Determination of Tensor Network State of Quantum Lattice Models in Two Dimensions},
  author = {Jiang, H. C. and Weng, Z. Y. and Xiang, T.},
  journal = {Phys. Rev. Lett.},
  volume = {101},
  issue = {9},
  pages = {090603},
  numpages = {4},
  year = {2008},
  month = {Aug},
  publisher = {American Physical Society},
  doi = {10.1103/PhysRevLett.101.090603},
  url = {https://link.aps.org/doi/10.1103/PhysRevLett.101.090603}
}

@article{algorithm_finite_peps,
  title = {Algorithms for finite projected entangled pair states},
  author = {Lubasch, Michael and Cirac, J. Ignacio and Ba\~nuls, Mari-Carmen},
  journal = {Phys. Rev. B},
  volume = {90},
  issue = {6},
  pages = {064425},
  numpages = {16},
  year = {2014},
  month = {Aug},
  publisher = {American Physical Society},
  doi = {10.1103/PhysRevB.90.064425},
  url = {https://link.aps.org/doi/10.1103/PhysRevB.90.064425}
}

@article{JW,
    author = "Jordan, Pascual and Wigner, Eugene P.",
    title = "{About the Pauli exclusion principle}",
    doi = "10.1007/BF01331938",
    journal = "Z. Phys.",
    volume = "47",
    pages = "631--651",
    year = "1928"
}

@article{dmrg,
  title = {Density matrix formulation for quantum renormalization groups},
  author = {White, Steven R.},
  journal = {Phys. Rev. Lett.},
  volume = {69},
  issue = {19},
  pages = {2863--2866},
  numpages = {0},
  year = {1992},
  month = {Nov},
  publisher = {American Physical Society},
  doi = {10.1103/PhysRevLett.69.2863},
  url = {https://link.aps.org/doi/10.1103/PhysRevLett.69.2863}
}
\end{document}